\begin{document}
\title{Revenue Maximizing Markets for Zero-Day Exploits}
\author{Mingyu Guo\inst{1} \and Hideaki Hata\inst{2} \and Ali Babar\inst{1}}
\institute{
    School of Computer Science\\University of Adelaide, Australia\\
    \{mingyu.guo, ali.babar\}@adelaide.edu.au
    \and
    Graduate School of Information Science\\Nara Institute of Science
    Technology, Japan\\
    hata@is.naist.jp}
\maketitle

\begin{abstract} Markets for zero-day exploits (software vulnerabilities unknown to the vendor) have a long history and
    a growing popularity. We study these markets from a revenue-maximizing mechanism design perspective. We first
    propose a theoretical model for zero-day exploits markets. In our model, one exploit is being sold to multiple
    buyers. There are two kinds of buyers, which we call the defenders and the offenders.  The defenders are buyers who
    buy vulnerabilities in order to fix them ({\em e.g.}, software vendors).  The offenders, on the other hand, are
    buyers who intend to utilize the exploits ({\em e.g.}, national security agencies and police).  Our model is more
    than a single-item auction. First, an exploit is a piece of information, so one exploit can be sold to multiple
    buyers.  Second, buyers have externalities. If one defender wins, then the exploit becomes worthless to the
    offenders. Third, if we disclose the details of the exploit to the buyers before the auction, then they may leave
    with the information without paying. On the other hand, if we do not disclose the details, then it is difficult for
the buyers to come up with their private valuations. Considering the above, our proposed mechanism discloses the details
of the exploit to all offenders before the auction. The offenders then pay to delay the exploit being disclosed to the
defenders.\end{abstract}

\begin{keywords}
Revenue Maximization $\cdot$
Mechanism Design $\cdot$
Security Economics $\cdot$
Bug Bounty
\end{keywords}

\section{Introduction}

A zero-day exploit refers to a software vulnerability that has not been disclosed to the public, and is unknown to the
software vendor.  Information of new vulnerabilities gives cyber attackers free passes to attacking targets, while the
vulnerabilities remain undetected.  The trading of zero-day exploits has a long history, and selling them by
security researchers as ``legitimate source of income'' is a recent trend~\cite{Egelman:2013:MZE:2535813.2535818}.

Zero-day exploits markets are not necessarily black markets where the buyers are potential cyber criminals.  Software
vendors buy exploits via bug bounty reward programs.  National security agencies and police also buy exploits. 
It is widely reported that government agencies utilize
zero-day exploits to track criminals or for other national security reasons.  Financial industry companies buy exploits
to prevent attacks (once an exploit method is known, these companies can then carry out counter measures to
prevent attacks).  There are legitimate venture capital backed security companies whose business model is to sell
exploits for profit.  For example, ZeroDium is a zero-day acquisition firm, which buys high-risk vulnerabilities with
premium rewards, then resells them to mostly government clients~\cite{fisher2015threatpost}. Another similar company is
Vupen, who offers a subscription service for its clients, providing vulnerability data and exploits for zero days and
other bugs~\cite{fisher2015threatpost}.

Greenberg presented a price list of zero-day exploit sale, ranging from \$5,000-\$30,000 to
\$100,000-\$250,000~\cite{greenberg2012forbes}.
These prices are so high because it is generally difficult for the software vendor to independently discover these
vulnerabilities~\cite{Bilge:2012:BWK:2382196.2382284},
hence the exploits are expected to be alive for long periods of time.


In this paper, we study markets for zero-day exploits from a revenue-maximizing mechanism design perspective. 
Our contributions are:

\begin{itemize}
    \item We present a theoretical mechanism design model for zero-day exploits markets.
        We identify the unique features of zero-day exploits markets.
        First, an exploit is a piece of information, so one exploit can be sold to multiple
buyers.  Second, buyers have externalities. We divide the buyers into two types: the offenders and the defenders. Once a
defender ``wins'' an exploit, the exploit becomes worthless for the offenders.
Third, if we disclose the details of the exploit to the buyers before the auction, then they may leave with the
information without paying. On the other hand, if we do not disclose the details, then it is difficult for the buyers to
come up with their private valuations. 

    \item We propose the {\em straight-forward (SF)} mechanism property, which requires that the mechanism discloses the
        full details of the exploit to the offenders before they submit their bids. In Proposition~\ref{prop:1}, we show
        that for the purpose of designing revenue-maximizing mechanisms, if SP, IR, and SF are required, then it is
        without loss of generality to focus on mechanisms that ``divide'' the time frame into two regions. Such a
        mechanism would disclose the full details of the exploit to all offenders, and then based on the reports from
        both the offenders and the defenders, pick an ending time. Before the ending time, the exploit is alive. Once it
        reaches the ending time, the exploit is revealed to the defenders, which renders it worthless. The offenders bid
        to keep the exploit alive, while the defenders bid to close the exploit earlier. Our model is similar to both
        the {\em cake-cutting} problem~\cite{Brams07:Better,Chen10:Truth} and the {\em single facility location}
        problem~\cite{Goemans04:Cooperative,Procaccia09:Approximate}.

    \item For a simplified single-parameter model, where every agent's type is characterized by a single parameter
        instead of a valuation function over time.  We modify and apply Myerson's classic technique for designing
        optimal single-item auction~\cite{Myerson81:Optimal} to our problem of dividing a continuous region.  We derive
        an optimal mechanism that maximizes the expected revenue for the single-parameter model.

    \item For the general model, we adopt the computationally feasible automated mechanism design
        approach~\cite{Guo10:Computationally}: instead of optimizing over all mechanisms, we focus on a family of
        parameterized mechanisms, and tune the parameters in order to obtain a good mechanism.  We focus on the AMA
        mechanisms used for designing revenue-maximizing combinatorial
        auctions~\cite{Likhodedov04:Boosting,Likhodedov05:Approximating}.  To
        identify a good AMA mechanism, we propose a technique that combines both optimization and heuristics.  We show
        via numerical experiments that our technique produces good revenue expectation: when applied to a
        single-parameter model (our technique does not require the single-parameter model), our technique achieves
        nearly $80\%$ of the optimal revenue (one reason we test our technique in a single-parameter model is that we are
able to calculate the optimal revenue for comparison).  \end{itemize}

\section{Model Description}

In this section, we present our mechanism design model for zero-day exploits markets. Our aim is to create a model with
minimal assumptions, and draw a parallel between our model and existing classic mechanism design models.

There is one exploit being sold to multiple game-theoretically strategic buyers. The seller is also the mechanism
designer, who wants to maximize her revenue ({\em e.g.}, the seller is a security company that sells exploits for
profit\footnote{Example such companies include ZeroDium and Vupen~\cite{fisher2015threatpost}.}).

\begin{assumption}
    The set of all buyers consists of two types of buyers: the defenders and the offenders.
    \begin{itemize}
        \item A defender is a buyer who buys exploits in order to fix them.
            Given a specific exploit, usually there is only one defender. For example, suppose the exploit attacks the Chrome browser, then Google
            is the defender, who would, for example, buy the exploit via its bug bounty reward program.
            Our model allows multiple defenders, but we assume that as soon as one defender gets hold of an exploit, the
            exploit gets immediately fixed, therefore rendering it worthless. That is, if one defender 
            receives information about an exploit, then all defenders benefit from it.
        \item An offender is a buyer who intends to utilize the exploit.
    \end{itemize}
\end{assumption}

Based on the above assumption, we cannot sell an exploit to an offensive buyer after we sell it to any defensive buyer.

\begin{assumption} One exploit is sold over a time frame $[0,1]$. $0$ represents the moment the exploit is ready for
    sale.  $1$ represents the exploit's end of life ({\em e.g.}, it could be the end of life of the affected software, or the
    release of a major service pack). 
    
    A mechanism outcome is represented by $(t_1,t_2,\ldots,t_n)$, where $t_i$ is the
    moment the exploit is disclosed to buyer $i$. 
    We assume that all defenders receive the information at the same time, denoted by $t_{end}$ (if one defender
    receives the information, then the exploit is fixed right away).
    It is also without loss of generality to assume that 
    if $i$ is an offensive buyer, then $t_i\in [0,t_{end}]$ (receiving the information after $t_{end}$ is equivalent to
    receiving it exactly at $t_{end}$ -- from this point on, the exploit is worthless).  

    Buyer $i$'s type is characterized by a nonnegative function $v_i(t)$. If $i$ is an offensive buyer who receives the exploit at 
    $t_i$ and the exploit gets fixed at $t_{end}$, then $i$'s valuation equals $\int_{t_i}^{t_{end}}v_i(t)dt$.
    Similarly, if $i$ is a defensive buyer, then her valuation equals $\int_{t_{end}}^{1}v_i(t)dt$.
    Basically, the offenders wish to keep the exploit alive for as long as possible, while the defenders wish to fix the
    exploit as early as possible.
\end{assumption}

A mechanism takes as input the valuation functions ($v_i(t)$ for $i=1,2,\ldots,n$), and produces an outcome $(t_1,t_2,\ldots,t_n)$ and a
payment vector $(p_1,p_2,\ldots,p_n)$, where $p_i$ is the amount $i$ pays under the mechanism. A buyer's utility equals
her valuation minus her payment.  We focus on mechanisms that are strategy-proof and individually rational.

\begin{definition} {\em Strategy-proof (SP)}: For every buyer $i$, by reporting $v_i(t)$ truthfully, her utility is maximized.\end{definition}

\begin{definition} {\em Individually rational (IR)}: For every buyer $i$, by reporting $v_i(t)$ truthfully, her utility is nonnegative. \end{definition}

Besides SP and IR, we introduce another mechanism property specifically for zero-day exploits markets.  

One thing we have been ignoring is that an exploit is a piece of information. As a result, if we disclose the exploit's
details to the buyers beforehand, then they may simply walk away with the information for free. If we do not describe
what we are selling, then it is difficult for the buyers to come up with their valuation functions.  

\begin{assumption}
    We assume that there are two ways for the seller to describe an exploit: either describe the full details, or
    describe what can be achieved with the exploit ({\em e.g.}, with this exploit, anyone can seize full control of a
    Windows 7 system remotely).
\begin{itemize}
    \item We assume that it is safe for the seller to disclose what can be achieved with the exploit.
        That is, the buyers will not be able to derive ``how it is done'' from ``what can be achieved''.
    \item If the seller only discloses what can be achieved, then it is difficult for an offensive
        buyer to determine whether the exploit is new, or something she already knows. It is therefore difficult for the
        offensive players to come up with their valuation functions in this kind of situation. They may 
        come up with expected valuation functions (by estimating how likely the exploit is new), but this may then lead
        to regret after the auction. 
    \item We assume that the defenders are able to come up with private valuation functions 
        just based on what can be achieved. This is because all zero-day exploits are by definition unknown to the
        defenders.
This assumption is consistent with practise. For bug bounty programs, vulnerabilities are generally classified into
different levels of severities, and vendors pay depending
on these classification~\cite{algarni2014software}. For example, Google Chrome provides guidelines to
classify vulnerabilities into critical, high, medium, and low severities, and pay accordingly~\cite{guideline2015chrome}.
    \end{itemize}
\end{assumption}

The above assumptions lead to the following mechanism property:

\begin{definition} {\em Straight-forward (SF)}: A mechanism is straight-forward if 
    the mechanism reveals the full details of the exploit to the offensive buyers, before asking for their valuation
    functions.
\end{definition}

It should be noted that SF does not require that the exploit details be revealed to the defenders before they bid.
If the seller does this, then the defenders would simply fix the exploit and bid $v_i(t)\equiv 0$. Due to IR, the
defenders can get away without paying.  

Offenders are revealed the details before they bid, but they cannot simply bid $v_i(t)\equiv 0$ to get away without 
paying. Our mechanisms' key idea is to use the defenders as ``threat''. That is, if the offenders bid 
too low, then we disclose the exploit to the defenders earlier, which renders the exploit worthless. Essentially,
the offenders need to pay to keep the exploit alive (the more they pay, the longer the exploit remains alive).

From now on, we focus on mechanisms that are SP, SF, and IR.
We present the following characterization result:

\begin{proposition}
    \label{prop:1}
Let $M$ be a mechanism that is strategy-proof, individually rational, and straight-forward.

We can easily construct $M'$ based on $M$, so that $M'$ is also strategy-proof, individually rational, and
straight-forward. $M'$ and $M$ have the same revenue for all type profiles. $M'$ takes the following form:

\begin{itemize}
\item At time $0$, the seller reveals the exploit in full details to all offenders, and reveals what can be achieved with the
exploit to all defenders.

\item Collect valuation functions from the buyers. 

\item Pick an outcome and a payment vector based on the reports. 
It should be noted that it is sufficient to represent the outcome using just $t_{end}$, which is when the exploit gets fixed.
\end{itemize}
\end{proposition}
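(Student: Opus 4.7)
The plan is to observe that the SF requirement makes the disclosure times $t_i$ to offenders redundant: if the seller must reveal the full details of the exploit to every offender before collecting bids, then each offender effectively holds the exploit from time $0$ regardless of what $t_i$ value the mechanism's outcome rule prescribes. This already contains the whole idea; the rest is bookkeeping.

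Concretely, I would take an arbitrary SP/IR/SF mechanism $M$ with allocation rule producing $(t_1,\ldots,t_n)$ and payment rule producing $(p_1,\ldots,p_n)$, and define $M'$ to (i) disclose full details to all offenders and a description of what can be achieved to all defenders at time $0$, (ii) collect the reports $v_i(\cdot)$, (iii) run $M$'s allocation rule to obtain a $t_{end}$, and (iv) use the same payment vector as $M$. In $M'$, for every offender we set $t_i = 0$ explicitly; for defenders $t_{end}$ is inherited from $M$. Because offenders in $M$ already possessed the exploit at time $0$ via SF, the valuation each offender actually experiences in $M$ is $\int_{0}^{t_{end}} v_i(t)\,dt$, which matches the valuation in $M'$; defender valuations $\int_{t_{end}}^{1} v_i(t)\,dt$ coincide by construction.

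With utilities and payments matching type-by-type, revenue equivalence is immediate, and SP and IR transfer directly from $M$ to $M'$: any deviation available in $M'$ induces the same allocation and payments as the corresponding deviation in $M$, so no profitable deviation can exist in $M'$ that did not already exist in $M$. SF holds by step (i) of $M'$'s construction. Finally, since every offender's $t_i$ is fixed to $0$, the outcome carries no information beyond $t_{end}$, giving exactly the form claimed in the proposition.

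The only real subtlety, and the place where a careless proof could slip, is pinning down the semantics of $t_i$ for offenders under SF: one has to argue that an offender's realized valuation in $M$ is $\int_{0}^{t_{end}} v_i(t)\,dt$ rather than $\int_{t_i}^{t_{end}} v_i(t)\,dt$, because the SF-mandated disclosure at time $0$ strictly precedes any later ``official'' disclosure at $t_i$ and the exploit is pure information that cannot be un-learned. Once this interpretive point is stated cleanly, the construction of $M'$ and the verification of SP, IR, SF, and revenue equivalence are routine.
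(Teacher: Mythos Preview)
Your construction of $M'$ and the overall shape of the argument match the paper's, but the step you flag as ``the only real subtlety'' is handled differently, and your version of it does not quite go through. You assert that under SF each offender's realized valuation in $M$ is already $\int_0^{t_{end}} v_i(t)\,dt$, on the grounds that ``the SF-mandated disclosure at time $0$'' precedes any later official $t_i$. But the SF property only says full details are revealed to offenders \emph{before their reports are collected}; it does not say this happens at time $0$. In $M$ the disclosure moment $t_i$ for an offender could be strictly positive (indeed, in the paper's model $t_i$ \emph{is} the disclosure moment), so the offender's valuation in $M$ is $\int_{t_i}^{t_{end}} v_i(t)\,dt$, not $\int_0^{t_{end}} v_i(t)\,dt$. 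Your ``utilities match type-by-type, so SP and IR transfer directly'' step therefore rests on an assumption SF does not supply.

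The paper's proof closes exactly this gap with a different observation: since SF forces the disclosure to offender $i$ to occur before $i$ reports, the time $t_i$ cannot depend on $i$'s own report. Hence moving $t_i$ to $0$ in $M'$ changes $i$'s valuation by $\int_0^{t_i} v_i(t)\,dt$, a quantity that is constant in $i$'s report (though it may depend on $\theta_{-i}$). Adding a report-independent constant to an agent's utility preserves SP, and since the constant is nonnegative it also preserves IR. That is the missing ingredient; once you replace ``valuations coincide'' with ``valuations differ by a term independent of $i$'s own report,'' your argument becomes the paper's.
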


The above proposition implies that for the purpose of design revenue-maximizing mechanisms, it is without loss of generality to focus on mechanisms with the above form.

\begin{proof}
    Given $M$, we modify it and construct $M'$ as follows: for all $i$ that is an offender, we move $t_i$ to $0$.
    For all $i$ that is a defender, we do not change $t_i$. For every type profile, we keep $M$'s payment vector. That
    is, $M'$ has the same revenue for every type profile. $M'$ is obviously SF.

    Now we show $M'$ is still SP and IR. It is easy to see that the defenders' valuations are not changed, so $M'$ is
    still SP and IR for the defenders.
    Offenders' valuations are changed. For offender $i$, originally under $M$, she receives the information at
    time $t_i$. Under $M'$, she receives the information at time $0$. It should be noted that because $M$ is SF, that
    means $t_i$ is not dependent on $i$'s own report. Therefore, the valuation increase for $i$, which equals
    $\int_0^{t_i}v_i(t)dt$, is independent of $i$'s own report. Hence, this increase of valuation does not change $i$'s
    strategy. $M'$ is still SP and IR for the offenders as well.
\end{proof}

\section{Comparing Against Classic Models}

To summarize our model, there are two types of agents (offenders and defenders).  Agent $i$'s type is characterized by
her valuation function $v_i(t)$.  The outcome $t_{end}(v_1,v_2,\ldots,v_n) \in [0,1]$ is chosen based on the type
profile.  The exploit is active between $[0,t_{end}]$, during which period all offenders can utilize the exploit.  The
exploit becomes worthless from $t_{end}$.  High bids (high valuation functions) from the offenders would push $t_{end}$
toward $1$, while high bids from the defenders would push $t_{end}$ toward $0$.  

Our model is very similar to both the {\em cake-cutting} problem and the {\em single facility location} problem.

{\em Cake-cutting:} The time frame $[0,1]$ can be viewed as the cake. $t_{end}$ cuts the cake into two halves. 
The agents' types are also characterized by valuation functions instead of single values. On the other hand, there are
also differences. For one thing, our model is more like {\em group} cake cutting, as both sides involve multiple agents.
Secondly, the offenders are bound to the left-hand side ($[0,t_{end}]$) while the defenders are bound to the
right-hand side ($[t_{end},1]$).  

{\em Single facility location:} $t_{end}$ can also be viewed as the position of the facility in a single
facility location problem.  The defenders are all positioned at $0$, so they prefer $t_{end}$ to be closer to $0$ (which
enlarges the interval $[t_{end},1]$). The offenders are all positioned at $1$, so they prefer $t_{end}$ to be closer to
$1$ (which enlarges the interval $[0,t_{end}]$).

Unfortunately, most cake-cutting and facility location literatures focus on money-free settings, so previous results
do not apply to our problem of revenue maximizing mechanism design.

\section{Optimal Single-Parameter Mechanism}

In this section, we study a simplified single-parameter model, and derive an optimal mechanism that maximizes the
expected revenue.  Results in this section are based on Myerson's technique on optimal single item auction, which
is modified to work for our problem.

\begin{assumption}
    Single-parameter model (we need this assumption only in this section): Agent $i$'s valuation function $v_i(t)$ is characterized by a single parameter $\theta_i \in
    [0,\infty)$:
        \[v_i(t) = \theta_i c_i(t)\]
        Here, $c_i(t)$ is a publicly known nonnegative function. That is, $i$'s type is characterized by a single parameter $\theta_i$.
\end{assumption}

For example, consider an offender $i$, if $c_i(t)$ represents the number of users $i$ may attack using the exploit at
time $t$, and $\theta_i$ is agent $i$'s valuation for attacking one user over one unit of time, then we
have $v_i(t) = \theta_i c_i(t)$. (For defenders, it'd be saving instead of attacking.)

For the single-parameter model, a mechanism is characterized by functions $t_{end}$ and $p$.
$t_{end}(\theta_1,\theta_2,\ldots,\theta_n)$ determines the outcome.  $p(\theta_1,\theta_2,\ldots,\theta_n)$ determines
the payment vector. Actually, for mechanisms that are SP and IR, $p$ is completely determined by the
allocation function $t_{end}$.

Fixing $\theta_{-i}$ and drop it from the notation, when agent $i$ reports $\theta_i$, we denote the outcome by
$t_{end}(\theta_i)$.

\begin{proposition}
    If $i$ is an offender, then we define \[x_i(\theta_i) = \int_{0}^{t_{end}(\theta_i)}c_i(t)dt\]

    If $i$ is a defender, then we define \[x_i(\theta_i) = \int_{t_{end}(\theta_i)}^{1}c_i(t)dt\]

    A mechanism is SP and IR if and only if for all $i$, $x_i(\theta_i)$ is nondecreasing in $\theta_i$, and
    agent $i$'s payment equals exactly

    \[\theta_ix_i(\theta_i)-\int_0^{\theta_i}x_i(z)dz\]
\end{proposition}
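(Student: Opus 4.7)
The plan is to recognize that this proposition is Myerson's classical single-parameter payment characterization applied to the present setting. The real work is the reduction: showing that, for both offender and defender types, agent $i$'s utility has the canonical ``type $\times$ allocation $-$ payment'' form, after which the standard Myerson argument goes through verbatim.

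First I would fix $\theta_{-i}$ and suppress it from the notation. By Proposition~\ref{prop:1} we may assume each offender receives the exploit at $t_i=0$, so an offender reporting $\hat\theta_i$ with true type $\theta_i$ has valuation $\int_0^{t_{end}(\hat\theta_i)}\theta_i c_i(t)\,dt=\theta_i\,x_i(\hat\theta_i)$. For a defender, the valuation is $\int_{t_{end}(\hat\theta_i)}^{1}\theta_i c_i(t)\,dt=\theta_i\,x_i(\hat\theta_i)$. In both cases the utility becomes $u_i(\hat\theta_i;\theta_i)=\theta_i\,x_i(\hat\theta_i)-p_i(\hat\theta_i)$, which is exactly Myerson's template with allocation rule $x_i$.

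For the ``only if'' direction I would add the two IC inequalities obtained by swapping true and reported type to derive $(\theta_i-\hat\theta_i)(x_i(\theta_i)-x_i(\hat\theta_i))\ge 0$, giving monotonicity of $x_i$. The standard envelope / integration argument on the indirect utility $U_i(\theta_i):=u_i(\theta_i;\theta_i)$ then yields $U_i(\theta_i)=U_i(0)+\int_0^{\theta_i}x_i(z)\,dz$. Since an agent of type $0$ has zero valuation, $U_i(0)=-p_i(0)$, and IR forces $p_i(0)\le 0$; taking $p_i(0)=0$ (the revenue-maximal IR-preserving normalization implicit in the proposition) yields the claimed payment identity. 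The ``if'' direction runs the same computation in reverse: under the stated payments, $U_i(\theta_i)-u_i(\hat\theta_i;\theta_i)=\int_{\hat\theta_i}^{\theta_i}\bigl(x_i(z)-x_i(\hat\theta_i)\bigr)\,dz\ge 0$ by monotonicity, establishing SP, while IR is immediate since $U_i(\theta_i)=\int_0^{\theta_i}x_i(z)\,dz\ge 0$.

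I do not expect a genuine obstacle; the one thing to be careful about is the defender case, where $x_i(\theta_i)=\int_{t_{end}(\theta_i)}^{1}c_i(t)\,dt$ being nondecreasing in $\theta_i$ corresponds to higher-valuation defenders receiving \emph{earlier} cutoffs $t_{end}$. This is the economically correct direction, and it confirms that defining $x_i$ as the integral of $c_i$ over each agent's own active region (rather than as a function of $t_{end}$ directly) is precisely what makes the Myerson framework apply uniformly across both agent types.
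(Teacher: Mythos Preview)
Your proposal is correct and follows essentially the same Myerson-style argument as the paper: both directions use the standard swap-of-IC-inequalities for monotonicity and the envelope/integration step for the payment identity. If anything, your version is slightly more careful about the $p_i(0)=0$ normalization, which the paper leaves implicit when it integrates $z\,x_i'(z)=p_i'(z)$ without tracking the constant.
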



\begin{proof}
    Suppose $x_i(\theta_i)$ is nondecreasing in $\theta_i$ and $i$ pays according to the above expression.
    By reporting $\theta_i$, $i$'s utility equals 
    \[\theta_ix_i(\theta_i)-\theta_ix_i(\theta_i)+\int_0^{\theta_i}x_i(z)dz = \int_0^{\theta_i}x_i(z)dz \ge 0.\]
    The above implies IR. We then show SP. By reporting $\theta_i'$, $i$'s utility equals
    \[\theta_ix_i(\theta_i')-\theta_i'x_i(\theta_i')+\int_0^{\theta_i'}x_i(z)dz.\]
    We subtract the above from $i$'s utility when reporting truthfully, the difference equals
    \[\int_0^{\theta_i}x_i(z)dz - \theta_ix_i(\theta_i')+\theta_i'x_i(\theta_i')-\int_0^{\theta_i'}x_i(z)dz.\]
    If $\theta_i>\theta_i'$, then the above equals
    \[\int_{\theta_i'}^{\theta_i}x_i(z)dz - (\theta_i-\theta_i')x_i(\theta_i') \ge
    \int_{\theta_i'}^{\theta_i}x_i(\theta_i')dz - (\theta_i-\theta_i')x_i(\theta_i')\]
    The right-hand side equals $0$. 
    Hence, under-reporting is never beneficial. Similarly, we can show over-reporting is
    never beneficial. 

    For the other direction, suppose the mechanism under discussion is SP and IR.  We use $p_i(\theta_i)$ to represent
    $i$'s payment. By SP, we have 
    \[\theta_ix_i(\theta_i)-p_i(\theta_i)\ge \theta_ix_i(\theta_i')-p_i(\theta_i')\]
    \[\theta_i'x_i(\theta_i')-p_i(\theta_i')\ge \theta_i'x_i(\theta_i)-p_i(\theta_i)\]
    Combining these two inequalities, we get 
    \[(\theta_i-\theta_i')x_i(\theta_i)\ge (\theta_i-\theta_i')x_i(\theta_i').\]
    Therefore, $x_i$ must be nondecreasing.

    By reporting $\theta_i'$, $i$'s utility equals $\theta_ix_i(\theta_i')-p_i(\theta_i')$.
    This is maximized when $\theta_i'=\theta_i$. Also, $\theta_i$ is arbitrary. We have $zx_i'(z)=p_i'(z)$.
    Integrating both sides from $0$ to $\theta_i$, we get that $i$'s payment must be as described in the
    proposition.
\end{proof}

For agent $i$, we assume $\theta_i$ is drawn independently from $0$ to an upper bound $H_i$, according to a probability density function
$f_i$ (and cumulative density function $F_i$). 
Agent $i$'s virtual
valuation $\phi_i(\theta_i)$ is defined as 
\[\phi_i(\theta_i) = \theta_i - \frac{1 - F_i(\theta_i)}{f_i(\theta_i)}\]
We need the {\em monotone hazard rate condition}: the virtual valuation
functions are nondecreasing (which is generally true for common distributions).

Given the payment characterization result, the expected payment from
agent $i$ equals $E_{\theta_i}(\phi_i(\theta_i)x_i(\theta_i))$. 
That is, 
given a type profile, to maximize revenue, we pick $t_{end}$ to maximize
$\sum_i(\phi_i(\theta_i)x_i(\theta_i))$. 
This decides how to pick the outcome. 

The last step is a new step on top of Myerson's technique, which is required
for our problem. For our model, $x_i$ is not necessarily bounded between
$0$ and $1$ (for single-item auction, the proportion won by an agent is between $0$ and $1$). Also, the sum of the $x_i$
is not necessarily bounded above by $1$ (for single-item auction, the total proportion allocated is at most $1$).
Without these bounds, picking the $x_i$ becomes more difficult. Fortunately, for our model, an outcome is characterized
by a single value, so we simply run a single dimensional optimization.  It should be noted that when an agent increases
her bid, her virtual valuation also increases according to the monotone hazard rate condition. This leads to higher
value for $x_i$ under our model. That is, the above rule for picking an outcome ensures that the $x_i$ are monotone.

The payments are then calculated according to the payment characterization result. The resulting mechanism maximizes the
expected revenue.

\section{General Model and Randomized Mechanisms}

In this section, we return to the original model where an agent's type is characterized by a valuation
function instead of a single parameter.

To design revenue-maximizing mechanisms for the general model, we adopt the computationally feasible automated mechanism
design approach~\cite{Guo10:Computationally}. That is, instead of optimizing over all mechanisms (which is too
difficult), 
we focus on a family of parameterized mechanisms, and tune the parameters in order to obtain a good
mechanism.  We focus on the AMA mechanisms used for designing revenue-maximizing combinatorial
auctions~\cite{Likhodedov04:Boosting,Likhodedov05:Approximating}. To identify an AMA mechanism with high revenue, we propose a technique that
combines both optimization and heuristic methods.

The family of AMA mechanisms includes the VCG mechanism as a special case. For our model, the VCG mechanism works as
follows:

\begin{itemize}
\item Pick an outcome $t_{end}^*$, which maximizes the agents' total valuation. We denote the set of offenders by $O$
    and the set of defenders by $D$. 
    For an offender $i\in O$, her valuation for outcome $t$
    equals $V_i(t)=\int_0^{t}v_i(z)dz$.
    For a defender $i\in D$, her valuation for outcome $t$
    equals $V_i(t)=\int_{t}^1v_i(z)dz$.

\[t_{end}^* = \arg\max_{t\in [0,1]}\{\sum_{i}V_i(t)\}\]
    
\item Then agent $i$ pays how much her presence hurts the other agents. That is, agent $i$ pays
\[\max_{t \in [0,1]}\{\sum_{j\neq i}V_j(t)\}-\sum_{j \neq i}V_j(t^*_{end})\]
\end{itemize}

The AMA mechanisms generalize the VCG mechanisms by assigning a positive coefficient $\mu_i$ to each agent.
The AMA mechanisms also assign an ``adjustment term'' $\lambda(o)$ for each outcome $o$, where $\lambda$ can be any
arbitrary function. For our model, the AMA mechanisms work as follows (different $\mu_i$ and $\lambda$ correspond to
different AMA mechanisms):

\begin{itemize}
\item Pick an outcome $t_{end}^*$, which maximizes the agents' total valuation, considering the $\mu_i$ and the function $\lambda$.
    \[t_{end}^* = \arg\max_{t\in [0,1]}\{\sum_{i}\mu_iV_i(t) + \lambda(t)\}\]
    
            \item Then agent $i$ pays how much her presence hurts the other agents, again, considering the $\mu_i$ and
                the function $\lambda$.
                Agent $i$ pays
                \[\frac{1}{\mu_i}\left(\max_{t \in [0,1]}\{\sum_{j\neq i}\mu_jV_j(t) + \lambda(t)\}-\sum_{j\neq i}\mu_jV_j(t^*_{end})-\lambda(t^*_{end})\right)\]
\end{itemize}

The idea behind the AMA mechanisms is that by assigning larger coefficients to the weaker agents (agents who most likely
lose according to the prior distribution), it increases competition, therefore increases revenue.
Also, if an outcome $o$ is frequently chosen and the agents have high surplus on this outcome, then by assigning a negative
$\lambda(o)$, the agents may be forced to pay more for this outcome.

All AMA mechanisms are SP and IR. Since we disclose the full details of the exploit to all offenders in the beginning,
SF is always guaranteed. 

So far, we have only considered deterministic mechanisms. $t^*_{end}$ refers to a particular moment. Between
$[0,t_{end}]$, the exploit is $100\%$ alive, while between $[t_{end},1]$, the exploit is $100\%$ dead (already fixed).
We could generalize the outcome space by allowing randomized mechanisms. A randomized mechanism's outcome is not just a
single value.
Instead, the outcome is characterized by a function $\alpha(t)$ over time.
For any moment $t$, $\alpha(t)$ represents the probability that the exploit is still alive at this moment.
$\alpha$'s values must be between $0$ and $1$, and it needs to nonincreasing. 
The new outcome space includes all deterministic outcomes. For example, the deterministic outcome $t^*_{end}$ is simply

\[\alpha(t)=\left\{
                       \begin{array}{ll}
                           1, & t\le t^*_{end} \\
                           0, & t > t^*_{end} 
                       \end{array} 
                   \right.
               \]

Allowing randomized mechanisms potentially increases the optimal expected revenue. For example, if one offender has extremely high
valuation with very low probability, then under a randomized mechanism, the mechanism could threat to
disclose the exploit with a low probability (say, $1\%$), unless the agent pays a buck load of money.
If the agent doesn't have high valuation, which is most of the time, then she wouldn't pay. 
Since the seller is only disclosing the exploit with $1\%$ probability, this does not change the expected revenue too much.
But if the agent does have high valuation, then the mechanism could earn way more from this agent.

A valid outcome function maps the time frame $[0,1]$ to values between $1$ and $0$, and are nonincreasing.  
Let $A$ be the outcome space.
It should be noted that $A$ does not have to contain all valid outcome functions.
Allowing randomization, the AMA mechanisms have the following form:

\begin{itemize}
    \item Pick an outcome function $\alpha\in A$, which maximizes the agents' total valuation, considering the $\mu_i$ and the function $\lambda$.

        For an offender $i\in O$, her valuation for outcome function $\alpha$
        equals $V_i(\alpha)=\int_0^{1}\alpha(z)v_i(z)dz$.

    For a defender $i\in D$, her valuation for outcome function $\alpha$
    equals $V_i(\alpha)=\int_{0}^1(1-\alpha(z))v_i(z)dz$.

    \[\alpha^* = \arg\max_{\alpha\in A}\{\sum_{i}\mu_iV_i(\alpha) + \lambda(\alpha)\}\]
    
            \item Then agent $i$ pays how much her presence hurts the other agents, again, considering the $\mu_i$ and
                the function $\lambda$.
                Agent $i$ pays
                \[\frac{1}{\mu_i}\left(\max_{\alpha \in A}\{\sum_{j\neq i}\mu_jV_j(\alpha) + \lambda(\alpha)\}-\sum_{j\neq
                i}\mu_jV_j(\alpha^*)-\lambda(\alpha^*)\right)\]
\end{itemize}

We need to pick the $\mu_i$ and $\lambda$ that correspond to high expected revenue. It is infeasible to numerically try
all $\mu_i$ values and all $\lambda$ functions.  As a result, we adopt a heuristic method for picking the $\mu_i$ and
$\lambda$.

First, we restrict the outcome space $A$ to functions of the following form:

\[\alpha(t)=\left\{
                       \begin{array}{ll}
                           \beta_1, & t\le \beta_2 \\
                           0, & t > \beta_2
                       \end{array} 
                   \right.
               \]
Here, both $\beta_1$ and $\beta_2$ are values between $0$ and $1$.
All functions in $A$ are characterized by these two parameters. We denote the outcome function characterized by $\beta_1$ and
$\beta_2$ by $\alpha_{\beta_1,\beta_2}$.
The idea is that instead of making the exploit $100\%$ alive from the beginning, we may simply kill the exploit right
from the beginning with probability $(1-\beta_1)$. 

We choose the following $\lambda$, where $\zeta$ is a parameter of the mechanism:
\[\lambda(\alpha_{\beta_1,\beta_2}) = \zeta(1-\beta_1)*\beta_2\]

What we are doing is that we reward outcomes that kill the exploit (with high probabilities) right from the beginning
(making these outcomes easier to get chosen under AMA). As a result,
if the offenders would like to keep the exploit alive with high probability from the beginning, they have to pay more. 
Previously, for deterministic mechanisms, the exploit is alive $100\%$ from the beginning. After the adjustments
here, the agents need to pay to achieve high probability from the beginning.

Once we focus our attention on $\lambda$ of the above form. An AMA mechanism is characterized by $n$ parameters: the
$\mu_i$ (except for $\mu_1$, since it is without loss of generality to set $\mu_1=1$) and $\zeta$.
For small number of agents, we are able to numerically optimize over these parameters and obtain an AMA mechanism with
good expected revenue.

\section{Example and Simulation}

In this section, we present an example mechanism design scenario, and simulate our proposed mechanisms' performances.

To make the examples more accessible, we consider a simple single-parameter setting involving just one offender (agent
$1$) and one defender (agent $2$).

For single-parameter settings, an agent's valuation function $v_i(t)$ equals $\theta_ic_i(t)$, where $c_i(t)$ describes
the pattern of this agent's valuation over time.
For the offender, we assume $c_1(t) = 1-t$. That is, the offender has higher valuation for the exploit earlier on, and
her valuation drops to $0$ at the end of the time frame. For the defender, we assume $c_2(t)=1$. That is, the defender's
valuation for the exploit does not change over time.

In order to make our example and simulation more realistic, we assume the exploit is a vulnerability of the Chrome
browser.  According to \cite{greenberg2012forbes}, an exploit that attacks the Chrome browser sells between $80k$ and
$200k$ for offensive clients (USD).  According to Google's official bug bounty reward program for the Chrome
browser~\cite{guideline2015chrome}, a serious exploit is priced between $0.5k$ and $15k$. That is, for a defender, we
expect the total valuation to be from this range.

The valuation of agent $1$ (the offender) for the exploit for the whole time frame equals $\theta_1\int_0^1(1-t)dt = \theta_1/2$.
So we assume $\theta_1$ is drawn from a uniform distribution $U(160,400)$.
The valuation of agent $2$ (the defender) for the exploit for the whole time frame equals $\theta_2\int_0^11dt = \theta_2$.
So we assume $\theta_2$ is drawn from a uniform distribution $U(0.5,15)$.

{\em Optimal single-parameter mechanism:} Agent $1$'s virtual valuation equals 
\[\phi_1(\theta_1) = \theta_1-\frac{1-\frac{\theta_1-160}{240}}{1/240} = 2\theta_1-400\]
Agent $2$'s virtual valuation equals 
Similarly, agent $2$'s virtual valuation equals $\phi_2(\theta_2) = 2\theta_2-15$.
Both are monotone as required.

Given a type profile, to maximize revenue, we pick $t_{end}$ to maximize
\[\sum_i(\phi_i(\theta_i)x_i(\theta_i)),\] 
where 
$x_1(\theta_1) = \int_{0}^{t_{end}}(1-t)dt = t_{end}-\frac{t_{end}^2}{2}$ and $x_2(\theta_2) = \int_{t_{end}}^{1}dt =
1-t_{end}$.
That is, we pick $t_{end}$ to maximize
\[(2\theta_1-400)(t_{end}-\frac{t_{end}^2}{2})+(2\theta_2-15)(1-t_{end})\]
For example, if $\theta_1=300$ and $\theta_2=10$, $t_{end}=0.975$. 

Based on the payment characterization result, agent $1$ pays $102.4$ and agent $2$ pays $0.2188$.
Considering all type profiles, the expected total revenue equals $79.20$.\\

{\em AMA mechanism:} As mentioned earlier, we focus on AMA mechanisms that are characterized by $2$ parameters: $\mu_2$
and $\zeta$. For each pair of parameters, we can simulate the expected revenue. After optimization, we choose $\mu_2=13$
and $\zeta=31$. For this pair, the expected revenue is $63.53$. This value is nearly $80\%$ of the optimal revenue
($79.20$). Also, we cannot achieve such good result without the heuristic term. If we set $\zeta=0$, then the obtained
revenue is $52.63$. 
We believe this example demonstrates the usefulness of our AMA and heuristic-based technique.
\\

{\em VCG mechanism:} The VCG mechanism is the AMA mechanism with $\mu2=1$ and $\zeta=0$. Under VCG, the
expected revenue is merely $7.667$.

\section{Conclusion}

In this paper, we study markets for zero-day exploits from a revenue-maximizing mechanism design perspective.  We
proposed a theoretical mechanism design model for zero-day exploits markets. By requiring a new mechanism property
called straight-forwardness, we also showed that for the purpose of designing revenue-maximizing mechanisms, it is
without loss of generality to focus on mechanisms that ``divide'' the time frame into two regions, which makes our model
similar to both the cake-cutting problem and the single facility location problem.

We first considered a simplified single-parameter model, where every agent's type is characterized by a single
parameter. With necessary modification and extension at the last step, we were able to apply Myerson's classic technique
for designing optimal single-item auction to our model and derived the optimal mechanism for single-parameter models.

For the general model, we adopted the computationally feasible automated mechanism design approach. We focused on the
AMA mechanisms.  To identify an AMA mechanism with high revenue, we proposed a technique that combines both
optimization and heuristics. Numerical experiments demonstrated that our AMA and heuristic-based technique performs
well.

\bibliographystyle{splncs03}
\bibliography{/home/mingyu/Dropbox/project_folder/mg}
\end{document}